\documentclass[a4paper, 11pt]{article}
\usepackage[top=1 in,bottom=1 in,left=1.0 in,right=1.0 in]{geometry}
\usepackage{amsmath,amsthm}
\usepackage{amssymb}
\usepackage{cite}
\usepackage{CJK}
\usepackage[colorlinks,linkcolor=blue,citecolor=blue]{hyperref}
\numberwithin{equation}{section}
\usepackage{graphicx}
\usepackage{epstopdf}
\usepackage{color}
\usepackage{subcaption}

\newtheorem{prop}{Proposition}[section]

\begin{document}

\title{Cauchy matrix approach to novel extended semi-discrete KP-type systems}

\author{Hong-juan Tian$^{1,2,3}$,~Abdselam Silem$^{4}$\footnote{Corresponding author. Email: silem.a@zjut.edu.cn} \\
{\small$^1$   College of  Computer and Information Engineering, Henan Normal  University, Xinxiang 453007, China }\\
{\small$^2$School of  Physics, Henan Normal  University, Xinxiang 453007, PR China}\\
{\small$^3$ Engineering Lab of Intelligence Business and Internet of Things, Henan Province}\\
{\small$^4$ Department of Mathematics, Zhejiang University of Technology, Hangzhou 310014, China}
}
\date{\today}

\maketitle

\begin{abstract}
Two novel extended semi-discrete KP-type systems, namely partial differential-difference systems with one continuous and two discrete variables, are investigated. Introducing an arbitrary function into the Cauchy matrix function or the plane wave factor allows the implementation of extended integrable systems within the Cauchy matrix approach. We introduce the bilinear $D\Delta^2$KP system, the extended $D\Delta^2$pKP, $D\Delta^2$pmKP, and $D\Delta^2$SKP systems, all of which are based on the Cauchy matrix approach. This results in a diversity of solutions for these extended systems as contrasted to the usual multiple soliton solutions.


\vskip 5pt
\noindent
\textbf{Key Words:} Squared eigenfunction, Semi-discrete KP-type system,  Generalized Cauchy matrix approach, Exact solutions. \\

\end{abstract}

\section{Introduction}\label{sec-1}
Over the past decades, the extensive investigations of the theory of discrete integrable systems \cite{Date-JPSJ-1982,Miwa-1981,PLA-1984,DS-book-2003,DIS-2016-book} have resulted in numerous significant accomplishments within this domain. For instance, integrable semi-discrete models \cite{FW-Nonl-2013,OL-1988-Waveguide,PRL-1998-Waveguide,Tamizhmani-CSF-1997,CK-JNMP-2017,GLW-PA-1988-Toda} are a fundamental subject in the integrable systems field. It is worth mentioning that these equations can serve as the superposition formula for B$\ddot{a}$cklund transformations of their corresponding continuous equations. Moreover, they also possess a significant mathematical value due to their underlying rich algebraic structure\cite{IMRN-2015-Novikov,TMP-2023-classification}. There are numerous methods for searching for integrable discretization of differential equations\cite{PA-1988,TMP-1999-Adler}, with one particularly significant approach being the construction of transformations between solutions \cite{PA-1988}.
The core concept of this discretization method is the establishment of discrete linear dispersion relations \cite{FW-1C2DKP-2021}.
The Cauchy matrix technique, introduced by Nijhoff et al., was initially proposed to study the soliton solutions of the lattice equations \cite{Nij-JPA-2009}. Motivated by these works, a  generalized Cauchy matrix approach\cite{ZZ-SAMP-2013,JNMP-2014} was developed, and has been used to derive several kinds of solutions for continuous and discrete integrable systems. The approach employs the Sylvester equation, which is associated with certain evolutionary relationships between the vectors $\mathbf{r}$ and $\mathbf{s}$. These relationships form the determining equation set (DES), where continuous and discrete KP equations \cite{CNSNS-2013} are governed entirely by the product of the plane wave factors.

In this paper, we propose two novel semi-discrete KP-type integrable systems, they are related to the following differential-difference system (involving one continuous and two discrete variables)
\begin{equation}\label{2dkp-1}
  \partial_x(\widehat{\mathrm{u}}-\widetilde{\mathrm{u}})=(\widehat{\mathrm{u}}-\widetilde{\mathrm{u}})(\mathrm{u}+\widehat{\widetilde{\mathrm{u}}}-\widehat{\mathrm{u}}-\widetilde{\mathrm{u}}).
\end{equation}
Here, we consider the potential function $\mathrm{u}=\mathrm{u}(x,n:p,m:q)$ and adopt notations of forward and backward shifts as follows:
\begin{subequations}
\begin{align}
   &&\widetilde{\mathrm{u}} \doteq \mathrm{u}(x,n+1,m),\widehat{\mathrm{u}}\doteq \mathrm{u}(x,n,m+1),\widehat{\widetilde{\mathrm{u}}} \doteq \mathrm{u}(x,n+1,m+1) \\
  &&\underset{\widetilde{}}{\mathrm{u}} \doteq \mathrm{u}(x,n-1,m),\underset{\widehat{}}{\mathrm{u}}\doteq \mathrm{u}(x,n,m-1),\underset{\widehat{\widetilde{}}}{\mathrm{u}} \doteq \mathrm{u}(x,n-1,m-1)
 \end{align}
\end{subequations}
Equation \eqref{2dkp-1} was implicitly proposed in \cite{Nij-1985} and appeared earlier in the above form as a master equation for generating the discrete Calogero–Moser model (see  Ref.\cite{Nij-PLA-1994}). Extended integrable systems can be generated by making modifications to the plane wave factor or, alternatively, by adding an arbitrary function to the Cauchy function matrix function \cite{CTP-KP-2020,AML-YO-2020,AML-Tian-2022}.

The paper is organized as follows. In Section.\ref{sec-2}, we briefly introduce the infinite matrix formalism of the Cauchy matrix approach. Next, in Section.\ref{sec-3} we demonstrate the efficiency of the Cauchy matrix approach in deriving the novel semi-discrete KP-type equations and the two types of extended integrable systems. Section.\ref{sec-4} provides the explicit formula for all possible vectors $\mathbf{r}, \mathbf{s}$ and the matrix $\mathbf{M}$. These formulas yield a variety of solutions compared to the typical $N$-soliton solutions (NSS) for these extended KP-type systems.

\section{Cauchy Matrix Approach}\label{sec-2}
In addition to deriving soliton solutions, The Cauchy matrix \cite{JNMP-2014}  approach also proved powerful in constructing new equations and their features.
For the three dimensional systems, the Cauchy-type matrix
  $M=(m_{i,j})_{N\times N}, m_{i,j}=\frac{r_is_j}{k_i+l_j}$  is introduced and satisfies the Sylvester equation
\begin{equation}\label{Syl-2}
\mathbf{KM}+\mathbf{ML}=\mathbf{rs}^T,
\end{equation}
where $\mathbf{s}=(s_1,\cdots,s_N)^T$ and
$(r_1,\cdots,r_N)^T$ are  constant vectors.
Assume that $\mathcal{E}(\mathbf{K})\bigcap \mathcal{E}(-\mathbf{L})=\varnothing$,
where $\mathcal{E}(\mathbf{L})$ and  $\mathcal{E}(-\mathbf{K})$ are
the eigenvalue set of $\mathbf{K}$ and $-\mathbf{L}$, respectively.
This will guarantee that the Sylvester equation \eqref{Syl-2} for a given $(\mathbf{L}, \mathbf{K}, \mathbf{r}, \mathbf{s})$
is solvable and has a unique solution $\mathbf{M}$. Scalar functions $\mathbf{S}^{(i,j)}=\mathbf{s}^{T}\mathbf{L}^j(\mathbf{I}+\mathbf{M})^{-1}\mathbf{K}^i\mathbf{r}$ meet certain recurrence relations,
and some relations have closed forms that lead to integrable systems. Note that the generalized Cauchy matrix approach is characterized by the Sylvester equation \eqref{Syl-2} that includes an unknown matrix $\mathbf{M}$ and constant matrices $\mathbf{K,L}$.  Motivated
by the approach's concept, dispersion relations will be imposed on $\mathbf{r}$ and $\mathbf{s}$ or on $\mathbf{M}$. Moreover, we will further concentrate on the evolution of the scalar function $\mathbf{S}^{(i,j)}$. It has the potential to not only integrate the Cauchy matrix technique in discrete and continuous integrable systems, but also to discover further connections linking the discrete systems with their continuous counterparts.

The key point of discretizing the Cauchy matrix approach is to introduce the discrete linear dispersions. For instance, evolutions of discrete, semi-discrete, and continuous KP hierarchies are fully characterized by the product of the plane wave factors
\begin{equation}\label{PWF-1}
 \rho(k)=\exp{\sum^{\infty}_{j=0} k^jx_j}\prod^{\infty}_{i=1}(p_i+k)^{n_i},   \sigma(l)=\exp{(-\sum^{\infty}_{j=0} l^jx_j})\prod^{\infty}_{i=1}(p_i-l)^{-n_i},
\end{equation}
whith $k$ and $l$ being two distinct spectral parameters. Note that the factors related to  $n_j$ can be seen
as discretisations of the exponents of $x_j$\cite{Miwa-1981}.
To clarify, when we substitute the continuous independent variables $x_j$ with discrete variables $n_j$ according to equation \eqref{PWF-1}, a series of equations of different types (continuous, semi-discrete, full-discrete) emerge, successively. The matrix $\mathbf{S}=(S^{(i,j)})_{\infty\times \infty}$ reads
\begin{equation}
 S^{(i,j)} =  \mathbf{s}^{\mathrm{T}} \mathbf{L}^j (\mathbf{I}+\mathbf{M})^{-1} \mathbf{K}^i \mathbf{r}
 =\mathbf{s}^T \mathbf{L}^j  \mathbf{ u}^{(i)}
 =\mathbf{w}^{(j)} \mathbf{K}^i \mathbf{r},
\label{pKP-S}
\end{equation}
where the auxiliary $N$-th order vectors $\mathbf{ u}^{(i)}$ and $\mathbf{w}^{(j)}$ are defined as
\begin{equation}\label{ui-wj}
 \mathbf{ u}^{(i)}= (\mathbf{I}+\mathbf{M})^{-1} \mathbf{K}^i\mathbf{ r},~~
 \mathbf{w}^{(j)} = \mathbf{s}^T\mathbf{L}^j (\mathbf{I+M})^{-1}.
\end{equation}

\subsection{ Semi-discrete KP-type equations }
Based on the generalized Cauchy matrix method, the D$\Delta^2$pKP equation is firstly constructed. Then modifying the plane wave factor and the Cauchy function matrix via adding an arbitrary function, the two type extended D$\Delta^2$pKP systems were obtained. Taking $\mathbf{r, s, M}$ as functions of $( n, m, x )$, different dispersion relations can be obtained for the same Sylvester equation \eqref{Syl-2}.

\begin{prop}\label{2pKP-cm-1}
 For given Sylvester equation \eqref{Syl-2} and dispersion relations
\begin{subequations}\label{2pKP-r-s}
\begin{align}
&\widetilde{ \mathbf{r}}=(p\mathbf{I}+\mathbf{K})\mathbf{r},~\widehat{ \mathbf{r}}=(q\mathbf{I}+\mathbf{K})\mathbf{r},~\mathbf{r}_x= \mathbf{K}\mathbf{r},\label{2dpKP-n} \\
  &\mathbf{s}=(p\mathbf{I}-\mathbf{L})\widetilde{\mathbf{s}},~ \mathbf{s}=(q\mathbf{I}-\mathbf{L})\widehat{\mathbf{s}},~ \mathbf{ s}_x = \mathbf{L}\mathbf{ s},\label{2dpKP-m}
 \end{align}
\end{subequations}
we can obtain the relations :
\begin{subequations}\label{2pKP-S-nmt}
\begin{align}
S^{(i,j)}_x &= S^{(i,j+1)}+S^{(i+1,j)}-S^{(0,j)}S^{(i,0)},\label{2pKP-S-x}\\
p\widetilde{S}^{(i,j)}  &=\widetilde{S}^{(i,j+1)}+pS^{(i,j)}+S^{(i+1,j)}-\widetilde{S}^{(i,0)}S^{(0,j)},\label{2pKP-S-n}\\
q\widehat{S}^{(i,j)}  &=\widehat{S}^{(i,j+1)}+qS^{(i,j)}+S^{(i+1,j)}-\widehat{S}^{(i,0)}S^{(0,j)}.\label{2pKP-S-m}
\end{align}
\end{subequations}
\end{prop}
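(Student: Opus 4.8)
The plan is to pass from the evolution of the matrix $\mathbf{M}$ to that of the auxiliary vectors $\mathbf{u}^{(i)}$ in \eqref{ui-wj}, and then to $S^{(i,j)}=\mathbf{s}^{T}\mathbf{L}^{j}\mathbf{u}^{(i)}$. Since $\mathbf{K},\mathbf{L}$ are constant and, under the stated spectral hypothesis, the Sylvester equation \eqref{Syl-2} has a \emph{unique} solution, one can read off how $\mathbf{M}$ transforms. Differentiating \eqref{Syl-2} in $x$ and inserting \eqref{2pKP-r-s} shows that $\mathbf{M}_x$ solves $\mathbf{K}\mathbf{M}_x+\mathbf{M}_x\mathbf{L}=\mathbf{K}(\mathbf{r}\mathbf{s}^{T})+(\mathbf{r}\mathbf{s}^{T})\mathbf{L}$, so $\mathbf{M}_x=\mathbf{r}\mathbf{s}^{T}$. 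Shifting \eqref{Syl-2} and using $\widetilde{\mathbf{r}}=(p\mathbf{I}+\mathbf{K})\mathbf{r}$ together with $\widetilde{\mathbf{s}}^{T}=\mathbf{s}^{T}(p\mathbf{I}-\mathbf{L})^{-1}$ (for $p\notin\mathcal{E}(\mathbf{L})$, as the relations implicitly require) gives $\widetilde{\mathbf{M}}=(p\mathbf{I}+\mathbf{K})\mathbf{M}(p\mathbf{I}-\mathbf{L})^{-1}$, which I would then collapse — invoking \eqref{Syl-2} in the form $\mathbf{K}+\mathbf{L}=\mathbf{K}(\mathbf{I}+\mathbf{M})+(\mathbf{I}+\mathbf{M})\mathbf{L}-\mathbf{r}\mathbf{s}^{T}$ — to the rank-one update $\widetilde{\mathbf{M}}=\mathbf{M}+\mathbf{r}\widetilde{\mathbf{s}}^{T}$, and similarly $\widehat{\mathbf{M}}=\mathbf{M}+\mathbf{r}\widehat{\mathbf{s}}^{T}$ with $\widehat{\mathbf{s}}^{T}=\mathbf{s}^{T}(q\mathbf{I}-\mathbf{L})^{-1}$.

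Next I would transfer these to $\mathbf{u}^{(i)}=(\mathbf{I}+\mathbf{M})^{-1}\mathbf{K}^{i}\mathbf{r}$. Using $\partial_x(\mathbf{I}+\mathbf{M})^{-1}=-(\mathbf{I}+\mathbf{M})^{-1}\mathbf{r}\mathbf{s}^{T}(\mathbf{I}+\mathbf{M})^{-1}$, $\mathbf{r}_x=\mathbf{K}\mathbf{r}$, and the scalar identity $\mathbf{s}^{T}(\mathbf{I}+\mathbf{M})^{-1}\mathbf{K}^{i}\mathbf{r}=S^{(i,0)}$, one gets $\partial_x\mathbf{u}^{(i)}=\mathbf{u}^{(i+1)}-S^{(i,0)}\mathbf{u}^{(0)}$. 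For the shift, multiply $\widetilde{\mathbf{u}}^{(i)}$ on the left by $\mathbf{I}+\widetilde{\mathbf{M}}=\mathbf{I}+\mathbf{M}+\mathbf{r}\widetilde{\mathbf{s}}^{T}$, use $\mathbf{K}^{i}\widetilde{\mathbf{r}}=p\mathbf{K}^{i}\mathbf{r}+\mathbf{K}^{i+1}\mathbf{r}$ and $\widetilde{\mathbf{s}}^{T}\widetilde{\mathbf{u}}^{(i)}=\widetilde{S}^{(i,0)}$, and apply $(\mathbf{I}+\mathbf{M})^{-1}$ to obtain $\widetilde{\mathbf{u}}^{(i)}=p\mathbf{u}^{(i)}+\mathbf{u}^{(i+1)}-\widetilde{S}^{(i,0)}\mathbf{u}^{(0)}$, and the hat analogue with $p$ replaced by $q$. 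Then, contracting $\partial_x\mathbf{u}^{(i)}$ with $\mathbf{s}^{T}\mathbf{L}^{j}$ and adding the contribution of $\partial_x(\mathbf{s}^{T}\mathbf{L}^{j})=\mathbf{s}^{T}\mathbf{L}^{j+1}$ yields \eqref{2pKP-S-x}; splitting $p\widetilde{S}^{(i,j)}=\mathbf{s}^{T}\mathbf{L}^{j}\widetilde{\mathbf{u}}^{(i)}+\widetilde{S}^{(i,j+1)}$ by means of $p\widetilde{\mathbf{s}}^{T}=\mathbf{s}^{T}+\widetilde{\mathbf{s}}^{T}\mathbf{L}$ (the transposed form of the first relation in \eqref{2dpKP-m}) and substituting the expression for $\widetilde{\mathbf{u}}^{(i)}$ yields \eqref{2pKP-S-n}; finally \eqref{2pKP-S-m} is obtained in the same way with the tilde shift replaced by the hat shift and $p$ by $q$.

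I expect the main obstacle to be the rank-one reduction of $\widetilde{\mathbf{M}}=(p\mathbf{I}+\mathbf{K})\mathbf{M}(p\mathbf{I}-\mathbf{L})^{-1}$ to the form $\mathbf{M}+\mathbf{r}\widetilde{\mathbf{s}}^{T}$: it hinges on invoking \eqref{Syl-2} at precisely the right place to absorb the otherwise non-rank-one term $\mathbf{K}+\mathbf{L}$, and all three identities in \eqref{2pKP-S-nmt} rest on it. A secondary technical point is bookkeeping with transposes — the relations \eqref{2dpKP-m} for $\mathbf{s}$ must be used on the row vector $\mathbf{s}^{T}$ in a way compatible with the placement of $\mathbf{L}$ in \eqref{Syl-2} — together with keeping track of which bilinear contractions collapse to the scalars $S^{(i,0)}$, $\widetilde{S}^{(i,0)}$ and $S^{(0,j)}$.
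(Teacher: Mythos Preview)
Your proposal is correct and follows essentially the same route as the paper: first obtain the rank-one updates $\mathbf{M}_x=\mathbf{r}\mathbf{s}^{T}$, $\widetilde{\mathbf{M}}=\mathbf{M}+\mathbf{r}\widetilde{\mathbf{s}}^{T}$, $\widehat{\mathbf{M}}=\mathbf{M}+\mathbf{r}\widehat{\mathbf{s}}^{T}$ from the Sylvester equation and its uniqueness, then derive the evolutions of $\mathbf{u}^{(i)}$ (the paper also records the companion $\mathbf{w}^{(j)}$ relations, which you do not need), and finally contract with $\mathbf{s}^{T}\mathbf{L}^{j}$ using the dispersion relations for $\mathbf{s}$. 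The only cosmetic difference is that you reach $\widetilde{\mathbf{M}}=\mathbf{M}+\mathbf{r}\widetilde{\mathbf{s}}^{T}$ via the intermediate expression $(p\mathbf{I}+\mathbf{K})\mathbf{M}(p\mathbf{I}-\mathbf{L})^{-1}$ and then reduce, whereas the paper verifies the rank-one form directly against the shifted Sylvester equation; both are equivalent one-line applications of \eqref{Syl-2}.
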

\begin{proof}
Based on the uniqueness of the solution and dispersion relation \eqref{2pKP-r-s}; together with the Sylvester equation \eqref{Syl-2}  w.r.t discrete variables $n-,m-$direction; and the derivative w.r.t continuous variable $x$, one can obtain the relations
\begin{equation}\label{2pKP-M-xyt}
\widetilde{ \mathbf{M}}= \mathbf{M}+\mathbf{r}\widetilde{\mathbf{s}}^T,~ \widehat{ \mathbf{M}}= \mathbf{M}+\mathbf{r}\widehat{\mathbf{s}}^T,~ \mathbf{M}_x=\mathbf{rs}^T.
\end{equation}
Then, according to the definition $\mathbf{u}^{(i)},\mathbf{w}^{(i)}\eqref{ui-wj}$ and the \eqref{2pKP-M-xyt}, we have
\begin{subequations}\label{2pKP-u-xyt}
 \begin{align}
   &\mathbf{u}_x^{(i)}= \mathbf{u}^{(i+1)}-\mathbf{u}^{(0)}S^{(i,0)},~ \mathbf{w}_x^{(j)}= \mathbf{w}^{(j+1)}-\mathbf{w}^{(0)}S^{(0,j)},\label{2pKP-uw-x}\\
   &\widetilde{\mathbf{u}}^{(i)}= p\mathbf{u}^{(i)}+ \mathbf{u}^{(i+1)}-\mathbf{u}^{(0)}\widetilde{S}^{(i,0)},~ \mathbf{w}^{(j)}= p\widetilde{\mathbf{w}}^{(j)}- \widetilde{\mathbf{w}}^{(j+1)}+\widetilde{\mathbf{w}}^{(0)}S^{(0,j)},\\
   &\widehat{\mathbf{u}}^{(i)}=  q\mathbf{u}^{(i)}+ \mathbf{u}^{(i+1)}-\mathbf{u}^{(0)}\widehat{S}^{(i,0)},~ \mathbf{w}^{(j)}=  q\widehat{\mathbf{w}}^{(j)}- \widehat{\mathbf{w}}^{(j+1)}+\widehat{\mathbf{w}}^{(0)}S^{(0,j)}.
\end{align}
\end{subequations}
Using  $S^{(i,j)}=\mathbf{s}^T\mathbf{L}^j\mathbf{ u}^{(i)}$, when taking the shift in the $n,m$ directions, and the derivative w.r.t $x$, by means of dispersion relation \eqref{2pKP-r-s} and relations, we obtain  \eqref{2pKP-S-nmt}.
\end{proof}

From the evolution relation of  Proposition \ref{2pKP-cm-1}, for specific values of $i$ and $j$, various semi-discrete KP-type equations are constructed accordingly.
\begin{itemize}
	\item D$\Delta^2$pKP equation
\end{itemize}

For \eqref{2pKP-S-nmt}, let $i=j=0$, $u=S^{(0,0)}$and eliminate $S^{(0,1)}, S^{(1,0)}$, it is the  D$\Delta^2$pKP equation
\begin{equation}\label{2pKP-1}
  \partial_x(\widehat{u}-\widetilde{u})=(p-q+\widehat{u}-\widetilde{u})(u+\widehat{\widetilde{u}}-\widehat{u}-\widetilde{u}).
\end{equation}
Under transform $\mathrm{u}=u-pn-mq$, equations \eqref{2pKP-1} and  \eqref{2dkp-1} are  equivalent.

\begin{itemize}
	\item D$\Delta^2$KP equation
\end{itemize}

The equation  \eqref{2pKP-1} can be deformed as
 $[ \ln(\widehat{\mathrm{u}}-\widetilde{\mathrm{u}})]_x=\mathrm{u}+\widehat{\widetilde{\mathrm{u}}}-\widetilde{\mathrm{u}}-\widehat{\mathrm{u}},$
then, let $Q=\widehat{\mathrm{u}}-\widetilde{\mathrm{u}}$, we have
\begin{equation}\label{2dkp-1==1}
  (\ln \widetilde{Q}-\ln\widehat{ Q})_x=\widehat{\widetilde{Q}}-\widetilde{Q}-\widehat{Q}+Q,
\end{equation}
 where equation \eqref{2dkp-1} is the potential form of  \eqref{2dkp-1==1}.

\begin{itemize}
	\item D$\Delta^2$pmKP equation
\end{itemize}

For the relations  \eqref{2pKP-S-nmt}, taking $v=1-S^{(-1,0)},~w=1-S^{(0,-1)}$,
 the D$\Delta^2$pmKP equation can be obtained as
\begin{equation}\label{2pmKP-1}
 \frac{p\widehat{v}-q\widetilde{v}}{\widehat{\widetilde{v}}}- \frac{\widetilde{v}_x+pv}{\widetilde{v}}+\frac{\widehat{v}_x+qv}{\widehat{v}}=0,~ or~
 \frac{\widehat{w}_x}{\widehat{w}}-\frac{p\widetilde{w}_x-q\widehat{w}_x }{p\widetilde{w}-q\widehat{w}}+p(\frac{\widetilde{w}}{w}-\frac{\widehat{\widetilde{w}}}{\widehat{w}})=0.
\end{equation}

\begin{itemize}
	\item D$\Delta^2$SKP equation
\end{itemize}

For the relations \eqref{2pKP-S-nmt}, take $z=S^{(-1,-1)}-\frac{n}{p}-\frac{m}{q}-\frac{h}{r}$
and use the identity relation $ \frac{(vw)^{\widehat{ }}}{(vw)^{\widetilde{ }}}=\frac{(\widehat{v}w)}{(\widetilde{v}w)}\frac{(\widetilde{v}w)^{\widehat{ }}}{(\widehat{v}w){\widetilde{ }}},$
 it is the D$\Delta^2$SKP equation
\begin{equation}\label{2SKP}
 \frac{1-\widehat{z}_x}{1-\widetilde{z}_x} =\frac{(z-\widehat{z})(\widehat{z}-\widehat{\widetilde{z}})}{(z-\widetilde{z})(\widetilde{z}-\widetilde{\widehat{z}})}.
\end{equation}

\begin{itemize}
	\item  Bilinear D$\Delta^2$KP system
\end{itemize}

Definitions  the function $\tau=\tau_{n,m,x}=\det(\mathbf{I}+\mathbf{M})$ and $Y(a)=1-\mathbf{ s}^T(a+\mathbf{L})^{-1} \mathbf{u}^{(0)}$, $V(a)=1- \mathbf{w}^{(0)}(a+\mathbf{K})^{-1} \mathbf{r}$,
it can be obtained $ \frac{\widetilde{\tau}}{\tau} = Y(-p)=\frac{1}{\widetilde{V}(p)},~\frac{\widehat{\tau}}{\tau} = Y(-q)=\frac{1}{\widehat{V}(q)},~
\frac{\tau_x}{\tau} = u.$
%
Using \eqref{2pKP-S-nmt},  the following closed relationship can be obtained
\begin{equation}\label{B-DKP-1}
(p-q)\widehat{\widetilde{\tau}}\tau+\widehat{\tau}\widetilde{\tau}_x-\widetilde{\tau}\widehat{\tau}_x
 =0.
\end{equation}
Under transformation $\tau\rightarrow (p-q)^{mn+m+n}\tau$, equation \eqref{B-DKP-1} is normalized to
\begin{equation}\label{B-DKP}
  \widehat{\widetilde{\tau}}\tau+\widehat{\tau}\widetilde{\tau}_x-\widetilde{\tau}\widehat{\tau}_x
   =0.
\end{equation}

\subsection{ First-type extended systems }

\begin{prop}
For \textbf{Proposition} \ref{2pKP-cm-1}, replacing dispersion relation \eqref{2pKP-r-s} with
 \begin{equation}\label{2dpKP-t+}
   \mathbf{r}_x= \mathbf{K}\mathbf{r}+C(x)\mathbf{r},~ \mathbf{ s}_x = \mathbf{L}^T\mathbf{ s}+C^T(x)\mathbf{ s}
 \end{equation}
 and meet conditions $C(x) \mathbf{L}=\mathbf{L}C(x),~C(x)\mathbf{K}=\mathbf{K} C(x)$,
 where $C(x)=\mathbf{Diag}(c_1(x),\cdots,~ c_n(x) )$, we obtain the new relations of $x$:
\begin{equation}\label{2pKP+11}
  S^{(i,j)}_x =S^{(i,j+1)}+S^{(i+1,j)}-S^{(0,j)}S^{(i,0)}+\mathbf{w}^{(j)} [2C(x)]\mathbf{u}^{(i)}.
\end{equation}
Simultaneously,  the evolutionary relation of the direction of $n,~m$ is consistent with the propositions \ref{2pKP-cm-1}.
\end{prop}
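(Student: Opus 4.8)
The plan is to rerun the proof of Proposition \ref{2pKP-cm-1}, tracking how the new matrix $C(x)$ propagates through $\mathbf{M}_x$, then through $\mathbf{u}^{(i)}_x$, $\mathbf{w}^{(j)}_x$, and finally through $S^{(i,j)}_x$. First I would differentiate the Sylvester equation \eqref{Syl-2} with respect to $x$ and insert the modified dispersions \eqref{2dpKP-t+}, written in row form as $(\mathbf{s}^T)_x=\mathbf{s}^T\mathbf{L}+\mathbf{s}^T C$, to obtain
\[
\mathbf{K}\mathbf{M}_x+\mathbf{M}_x\mathbf{L}=\mathbf{K}(\mathbf{rs}^T)+(\mathbf{rs}^T)\mathbf{L}+C(\mathbf{rs}^T)+(\mathbf{rs}^T)C.
\]
Replacing $\mathbf{rs}^T$ in the last two terms by $\mathbf{KM}+\mathbf{ML}$ via \eqref{Syl-2} and using the commutation hypotheses $C\mathbf{K}=\mathbf{K}C$, $C\mathbf{L}=\mathbf{L}C$, the right-hand side regroups as $\mathbf{K}X+X\mathbf{L}$ with $X=\mathbf{rs}^T+C\mathbf{M}+\mathbf{M}C$; the uniqueness hypothesis $\mathcal{E}(\mathbf{K})\cap\mathcal{E}(-\mathbf{L})=\varnothing$ then forces
\[
\mathbf{M}_x=\mathbf{rs}^T+C\mathbf{M}+\mathbf{M}C,
\]
the natural deformation of $\mathbf{M}_x=\mathbf{rs}^T$ in \eqref{2pKP-M-xyt}.

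Next I would feed this into the definitions \eqref{ui-wj}. Differentiating $\mathbf{u}^{(i)}=(\mathbf{I}+\mathbf{M})^{-1}\mathbf{K}^i\mathbf{r}$, using $\partial_x(\mathbf{I}+\mathbf{M})^{-1}=-(\mathbf{I}+\mathbf{M})^{-1}\mathbf{M}_x(\mathbf{I}+\mathbf{M})^{-1}$ and $\mathbf{r}_x=\mathbf{K}\mathbf{r}+C\mathbf{r}$, the key algebraic step is the identity $C\mathbf{M}+\mathbf{M}C=C(\mathbf{I}+\mathbf{M})+(\mathbf{I}+\mathbf{M})C-2C$, which lets $C$ be moved across $(\mathbf{I}+\mathbf{M})^{-1}$ at the cost of a single residual term $2(\mathbf{I}+\mathbf{M})^{-1}C\mathbf{u}^{(i)}$; combined with $C\mathbf{K}^i=\mathbf{K}^iC$ the terms $(\mathbf{I}+\mathbf{M})^{-1}\mathbf{K}^iC\mathbf{r}$ cancel in pairs, leaving
\[
\mathbf{u}^{(i)}_x=\mathbf{u}^{(i+1)}-\mathbf{u}^{(0)}S^{(i,0)}-C\mathbf{u}^{(i)}+2(\mathbf{I}+\mathbf{M})^{-1}C\mathbf{u}^{(i)},
\]
that is, the relation \eqref{2pKP-uw-x} with two correction terms; an entirely analogous computation yields the companion relation for $\mathbf{w}^{(j)}_x$.

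Then I would assemble $S^{(i,j)}_x$ from $S^{(i,j)}=\mathbf{s}^T\mathbf{L}^j\mathbf{u}^{(i)}$, differentiating and using $(\mathbf{s}^T)_x=\mathbf{s}^T\mathbf{L}+\mathbf{s}^T C$ together with the displayed expression for $\mathbf{u}^{(i)}_x$. The relation $C\mathbf{L}^j=\mathbf{L}^jC$ is used once more, and the two stray contributions $\mathbf{s}^T\mathbf{L}^jC\mathbf{u}^{(i)}$ — one arising from $(\mathbf{s}^T)_x$, one from the $-C\mathbf{u}^{(i)}$ term — cancel, so the only surviving trace of the deformation is $2\,\mathbf{s}^T\mathbf{L}^j(\mathbf{I}+\mathbf{M})^{-1}C\mathbf{u}^{(i)}=\mathbf{w}^{(j)}[2C(x)]\mathbf{u}^{(i)}$, which is exactly \eqref{2pKP+11}. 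Finally, for the $n$- and $m$-directions I would note that the shift dispersions \eqref{2dpKP-n}--\eqref{2dpKP-m} are unchanged and $C(x)$ does not enter them, so the derivations of $\widetilde{\mathbf{M}}=\mathbf{M}+\mathbf{r}\widetilde{\mathbf{s}}^T$, $\widehat{\mathbf{M}}=\mathbf{M}+\mathbf{r}\widehat{\mathbf{s}}^T$ and of the discrete relations \eqref{2pKP-S-n}--\eqref{2pKP-S-m} carry over verbatim from Proposition \ref{2pKP-cm-1}. The main obstacle is purely bookkeeping: ensuring every insertion or removal of $C(x)$ is licensed by one of the commutation hypotheses and that all non-telescoping $C$-terms genuinely cancel, leaving the single clean term $\mathbf{w}^{(j)}[2C(x)]\mathbf{u}^{(i)}$ — the decomposition $C\mathbf{M}+\mathbf{M}C=C(\mathbf{I}+\mathbf{M})+(\mathbf{I}+\mathbf{M})C-2C$ is the step that makes this collapse happen and is the one I would watch most carefully.
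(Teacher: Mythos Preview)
Your proposal is correct and is precisely the ``similar calculation with Proposition~\ref{2pKP-cm-1}'' that the paper invokes as its entire proof; you have simply written out in full the bookkeeping the authors leave implicit. The key steps you isolate --- the deformed relation $\mathbf{M}_x=\mathbf{rs}^T+C\mathbf{M}+\mathbf{M}C$ from uniqueness of the Sylvester solution, the decomposition $C\mathbf{M}+\mathbf{M}C=C(\mathbf{I}+\mathbf{M})+(\mathbf{I}+\mathbf{M})C-2C$, and the cancellation of the $\mathbf{s}^T\mathbf{L}^jC\mathbf{u}^{(i)}$ terms --- are exactly what is needed, and your observation that the $n,m$ shifts are untouched by $C(x)$ matches the paper's final clause.
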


\begin{proof}
  Similar calculation with \textbf{Proposition} \ref{2pKP-cm-1}, we can obtain the relation \eqref{2pKP+11}.
\end{proof}

\begin{itemize}
	\item First-type extended D$\Delta^2$pKP system
\end{itemize}

Making use of  \eqref{2pKP-S-n}, \eqref{2pKP-S-m}, \eqref{2pKP+11}, and \eqref{2pKP-uw-x}and taking $u=S^{(0,0)}$
and  $\Phi=(\phi_1,\phi_2,\cdots,\phi_N)$, $\Psi=(\psi_1,\psi_2,\cdots,\psi_N)^T$, with $\psi_j=\sqrt{2c_j(x)}[\mathbf{u}^{(0)}]_j$,
  $\phi_j=\sqrt{2c_j(x)}[\mathbf{w}^{(0)}]_j$,
we have
\begin{subequations}\label{2pKP+SCS1}
\begin{align}
&\partial_x(\widehat{u}-\widetilde{u})= (p-q+\widehat{u}-\widetilde{u})(u+\widehat{\widetilde{u}}-\widehat{u}-\widetilde{u})  +(E_m-E_n)\Phi\Psi,\\
&(E_n-E_m)\Psi=(p-q-\widetilde{u}+\widehat{u})\Psi ,~
(E_m-E_n)\Phi=(p-q-\widetilde{u}+\widehat{u})\widehat{\widetilde{\Phi}},\label{2pKP+SCS1-a}
\end{align}
\end{subequations}
where $E_n,~E_m$ are the shift operators in $n,m$ direction.
Under the transformation $\mathrm{u}=u-pn-mq$, the system \eqref{2pKP+SCS1},
just the first-type extended  D$\Delta^2$pKP system (ED$\Delta^2$pKP-1).
\begin{subequations}\label{2pKP=1}
  \begin{align}
&\partial_x(\widehat{\mathrm{u}}-\widetilde{\mathrm{u}})= (\widehat{\mathrm{u}}-\widetilde{\mathrm{u}})(\mathrm{u}+\widehat{\widetilde{\mathrm{u}}}-\widehat{\mathrm{u}}-\widetilde{\mathrm{u}})  +(E_m-E_n)\Phi\Psi,\label{2pKP=1-1}\\
&(E_n-E_m)\Psi=(\widehat{\mathrm{u}}-\widetilde{\mathrm{u}})\Psi ,\
  (E_m-E_n)\Phi=(\widehat{\mathrm{u}}-\widetilde{\mathrm{u}})\widehat{\widetilde{\Phi}}.
   \end{align}
\end{subequations}

\begin{itemize}
	\item First-type extended D$\Delta^2$pmKP system
\end{itemize}

From  \eqref{2pKP-S-n}, \eqref{2pKP-S-m}, \eqref{2pKP+11} and \eqref{2pKP-uw-x}, taking $v=1-S^{(-1,0)},~w=1-S^{(0,-1)}$ and $\psi_j=\sqrt{2c_j(x)}[\mathbf{u}^{(-1)}]_j,$~ $\phi_j=\sqrt{2c_j(x)}[\mathbf{w}^{(0)}]_j$, we have the D$\Delta^2$pmKP  with source
\begin{subequations}\label{2pmKP-1+scs}
  \begin{align}
   &\frac{p\widehat{v}-q\widetilde{v}}{\widehat{\widetilde{v}}}-\frac{\widetilde{v}_x+pv}{\widetilde{v}}+\frac{\widehat{v}_x+qv }{\widehat{v}}=\frac{\widetilde{\Phi}\widetilde{\Psi}}{\widetilde{v}}-\frac{  \widehat{\Phi}\widehat{\Psi}}{\widehat{v}},\\
   & (\widehat{\Phi}-\widetilde{\Phi})\widehat{\widetilde{v}}=\widetilde{\widehat{\Phi}}(p\widehat{v}-q\widetilde{v}),\
      \widetilde{\Psi}\widehat{v}-\widehat{\Psi}\widetilde{v}=\Psi(p\widehat{v}-q\widetilde{v}).
   \end{align}
\end{subequations}
Equations \eqref{2pmKP-1+scs} can be named  as the first-type extended  D$\Delta^2$pmKP system (ED$\Delta^2$pmKP-1).

\begin{itemize}
	\item First-type extended D$\Delta^2$SKP system
\end{itemize}

From  \eqref{2pKP-S-n}, \eqref{2pKP-S-m}, \eqref{2pKP+11} and \eqref{2pKP-uw-x}, taking $z=S^{(-1,-1)}-\frac{n}{p}-\frac{m}{q}-\frac{h}{r}$
and $\psi_j=\sqrt{2c_j(x)}[\mathbf{u}^{(-1)}]_j$, $\phi_j=\sqrt{2c_j(x)}[\mathbf{w}^{(-1)}]_j$,  the first-type  extended D$\Delta^2$SKP system (ED$\Delta^2$SKP-1) can be written as
\begin{subequations}\label{2SKP+scs=1}
\begin{align}
\frac{1-\widehat{z}_x+\widehat{\Phi}\widehat{\Psi}}{1-\widetilde{z}_x+\widetilde{\Phi}\widetilde{\Psi}} &=\frac{(z-\widehat{z})(\widehat{z}-\widehat{\widetilde{z}})}{(z-\widetilde{z})(\widetilde{z}-\widetilde{\widehat{z}})},\\
(\widehat{\Phi}-p\widehat{\widetilde{\Phi}})p(\widehat{z}-\widehat{\widetilde{z}})&=q(\widetilde{z}-\widehat{\widetilde{z}})(\widetilde{\Phi}-q\widetilde{\widehat{\Phi}}) ,\\
 ( \widehat{\Psi}-q\Psi)p(z-\widetilde{z})&=q(z-\widehat{z})( \widetilde{\Psi}-p\Psi).
 \end{align}
\end{subequations}

\begin{itemize}
	\item  Extended bilinear D$\Delta^2$KP system
\end{itemize}

First, for the definition of the $\tau$ function in the case of adding any function $C(x)$,
using the definitions  $Y(a)=1-\mathbf{ s}^T(a+\mathbf{L})^{-1} \mathbf{u}^{(0)}$, $V(a)=1- \mathbf{w}^{(0)}(a+\mathbf{K})^{-1} \mathbf{r}$,  it can be obtained
\begin{equation}\label{2dpKP-tt-1}
  \frac{\widetilde{\tau}}{\tau} = Y(-p)=\frac{1}{\widetilde{V}(p)},~\frac{\widehat{\tau}}{\tau} = Y(-q)=\frac{1}{\widehat{V}(q)},~
\frac{\tau_x}{\tau} = u.
\end{equation}
Using \eqref{2pKP-S-n}, \eqref{2pKP-S-m}, \eqref{2pKP+11},  the closed  relationship can be obtained as
\begin{equation}\label{2dkpscs-1}
(p-q)\widehat{\widetilde{\tau}}\tau+\widehat{\tau}\widetilde{\tau}_x-\widetilde{\tau}\widehat{\tau}_x
 =(\widehat{\tau}\widetilde{\mathbf{s}}^T-\widetilde{\tau}\widehat{\mathbf{s}}^T)(\mathbf{I}+\mathbf{M})^{-1}[C(x)-\mathbf{M}C(x)] \mathbf{u}^{(0)}\tau.
\end{equation}

Next, let $\sigma:=(\sigma_1, \cdots, \sigma_N ),~ \sigma_j=\tau A\psi_j $, i.e.$\sigma=\tau\Psi$. At the same time $\rho^*:=(\rho_1,\cdots,\rho_N),~ \rho_j=\tau  \phi_j,~ A=\frac{1}{2}(\mathbf{I}-\mathbf{M})$, i.e. $\rho^*=\tau\Phi$, where the definition of $\phi_j, ~\psi_j$
is the same as in \eqref{2pKP+SCS1-a}. Using  \eqref{2pKP-uw-x}
and $p-q-\widetilde{u}+\widehat{u}= (p-q)\frac{\widetilde{\widehat{\tau}}\tau}{ \widetilde{\tau}\widehat{\tau}}$,
it can be verified separately
\begin{subequations}
\begin{align}
 \widehat{\tau}\widetilde{\sigma}-\widetilde{\tau}\widehat{\sigma} &=(p-q)\widetilde{\widehat{\tau}}\sigma,\label{2kpscs-C1}\\
 \widetilde{\tau}\widehat{\rho}^*-\widehat{\tau}\widetilde{\rho}^* & =(p-q)\tau\widetilde{\widehat{\rho}}^*.\label{2kpscs-C2}
\end{align}
\end{subequations}
On the basis of \eqref{2dpKP-tt-1}, auxiliary functions $ \mathbf{w}(b)  =\mathbf{ s}^T (b+\mathbf{L})^{-1} (\mathbf{I}+\mathbf{M})^{-1}$, $V(a)=1- \mathbf{w}_0(a+\mathbf{K})^{-1} \mathbf{r}$ are used to obtain $\mathbf{w}(-p)=-\frac{\widetilde{\mathbf{w}}_0}{\widetilde{V}(p)}$.
Making similar processing of $\mathbf{w}(-r)$, it can be simplified to $ \mathbf{w}(-r)=-\frac{\widehat{\mathbf{w}}_0}{\widehat{V}(r)}.$
Therefore,  the source form of \eqref{2dkpscs-1} can be expressed as
\begin{equation*}
  (\widehat{\tau}\widetilde{\mathbf{s}}^T-\widetilde{\tau}\widehat{\mathbf{s}}^T)(\mathbf{I}+\mathbf{M})^{-1}[C(x)-\mathbf{M}C(x)] \mathbf{u}_0\tau_x=(p-q)\widehat{\widetilde{\rho}}^* \sigma,
\end{equation*}
with $\rho^*, ~ \sigma $ satisfy \eqref{2kpscs-C1} and \eqref{2kpscs-C2}.
Based on the above discussion and under transformation $\tau\rightarrow (p-q)^{mn+m+n}\tau$, the first-type extended bilinear D$\Delta^2$KP system can be normalized as
\begin{subequations}\label{2dkpscs}
\begin{align}
   &\widehat{\widetilde{\tau}}\tau+\widehat{\tau}\widetilde{\tau}_x-\widetilde{\tau}\widehat{\tau}_x
     =\widehat{\widetilde{\rho}}^* \sigma ,\label{2dkpscs+}\\
  & \widehat{\tau}\widetilde{\sigma}-\widetilde{\tau}\widehat{\sigma} =\widetilde{\widehat{\tau}}\sigma,~
\widetilde{\tau}\widehat{\rho}^*-\widehat{\tau}\widetilde{\rho}^* =\tau\widetilde{\widehat{\rho}}^*.\label{2kpscs=C2}
\end{align}
\end{subequations}

\subsection{ Second-type extended systems }

\begin{prop}\label{2pKP-cm-5}
For the given Sylvester equation and dispersion relation \eqref{2pKP-r-s}, using the dressed Cauchy kernel $\mathbf{M}^+=\mathbf{M}+C(m),$ the  infinite matrix and  exchange conditions can be redefined as $U_{\alpha,\beta}=\mathbf{ s}^T \mathbf{L}^\beta (\mathbf{I}+\mathbf{M}^+)^{-1}\mathbf{K}^\alpha \mathbf{r}=\mathbf{ s}^T \mathbf{L}^\beta\mathbf{u}_\alpha=\mathbf{w}_\beta\mathbf{K}^\alpha \mathbf{r},~ C(m)\mathbf{L}=\mathbf{L}C(m),~C(m)\mathbf{K}=\mathbf{K} C(m).$
 Then, the evolutionary relations can be obtained as
\begin{subequations}\label{2dpKP-U+}
\begin{align}
&\partial_x U_{\alpha,\beta} = U_{\alpha,\beta+1}+U_{\alpha+1,\beta}-U_{0,\beta}U_{\alpha,0},\label{2dpKP-U-1}\\
&p\widetilde{U}_{\alpha,\beta}=\widetilde{U}_{\alpha,\beta+1}+ pU_{\alpha,\beta}+ U_{\alpha+1,\beta}-\widetilde{U}_{\alpha,0}U_{0,\beta},\label{2dpKP-U-2}\\
&q\widehat{U}_{\alpha,\beta}=\widehat{U}_{\alpha,\beta+1}+qU_{\alpha,\beta}+ U_{\alpha+1,\beta}-\widehat{U}_{\alpha,0}U_{0,\beta}- \mathbf{w}_\beta[C(m+1)-C(m)]\widehat{ \mathbf{u}}_\alpha.\label{2dpKP-U-3}
 \end{align}
\end{subequations}
\end{prop}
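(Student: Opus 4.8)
The plan is to rerun the proof of Proposition~\ref{2pKP-cm-1} with the dressed kernel $\mathbf{M}^{+}=\mathbf{M}+C(m)$ in place of $\mathbf{M}$, tracking the single new contribution produced by the $m$-dependence of $C(m)$. The first step is to record the deformation relations of $\mathbf{M}^{+}$. Because $C(m)$ is independent of $x$ and of $n$ while $\widehat{C(m)}=C(m+1)$, relations \eqref{2pKP-M-xyt} of Proposition~\ref{2pKP-cm-1} give immediately
\[
\widetilde{\mathbf{M}^{+}}=\mathbf{M}^{+}+\mathbf{r}\widetilde{\mathbf{s}}^{T},\qquad
\widehat{\mathbf{M}^{+}}=\mathbf{M}^{+}+\mathbf{r}\widehat{\mathbf{s}}^{T}+\big[C(m+1)-C(m)\big],\qquad
\mathbf{M}^{+}_{x}=\mathbf{r}\mathbf{s}^{T}.
\]
The commutativity hypotheses $C(m)\mathbf{K}=\mathbf{K}C(m)$ and $C(m)\mathbf{L}=\mathbf{L}C(m)$ are what keep $\mathbf{M}^{+}$ inside the Cauchy matrix scheme: they turn the inhomogeneous term of the Sylvester-type equation satisfied by $\mathbf{M}^{+}$ into $\mathbf{r}\mathbf{s}^{T}+C(m)(\mathbf{K}+\mathbf{L})$, so that $\mathbf{I}+\mathbf{M}^{+}$ stays invertible under $\mathcal{E}(\mathbf{K})\cap\mathcal{E}(-\mathbf{L})=\varnothing$, and they let $C(m)$ be moved past powers of $\mathbf{K}$ and $\mathbf{L}$ below; I would check this at the outset.

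Next I would compute the evolution of the auxiliary vectors $\mathbf{u}_{\alpha}=(\mathbf{I}+\mathbf{M}^{+})^{-1}\mathbf{K}^{\alpha}\mathbf{r}$ and $\mathbf{w}_{\beta}=\mathbf{s}^{T}\mathbf{L}^{\beta}(\mathbf{I}+\mathbf{M}^{+})^{-1}$. Differentiating $(\mathbf{I}+\mathbf{M}^{+})\mathbf{u}_{\alpha}=\mathbf{K}^{\alpha}\mathbf{r}$ in $x$ and using $\mathbf{r}_{x}=\mathbf{K}\mathbf{r}$ together with $\mathbf{M}^{+}_{x}=\mathbf{r}\mathbf{s}^{T}$ gives $\partial_{x}\mathbf{u}_{\alpha}=\mathbf{u}_{\alpha+1}-\mathbf{u}_{0}U_{\alpha,0}$, exactly as in the undressed case because $C(m)$ is $x$-independent. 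Shifting $(\mathbf{I}+\mathbf{M}^{+})\mathbf{u}_{\alpha}=\mathbf{K}^{\alpha}\mathbf{r}$ in the $n$-direction and using $\widetilde{\mathbf{r}}=(p\mathbf{I}+\mathbf{K})\mathbf{r}$ with the first deformation relation gives $(\mathbf{I}+\mathbf{M}^{+})\widetilde{\mathbf{u}}_{\alpha}=p\mathbf{K}^{\alpha}\mathbf{r}+\mathbf{K}^{\alpha+1}\mathbf{r}-\mathbf{r}\,\widetilde{U}_{\alpha,0}$; the $m$-shift, using $\widehat{\mathbf{r}}=(q\mathbf{I}+\mathbf{K})\mathbf{r}$ with the second deformation relation, produces the analogue with $q$ instead of $p$ plus one extra term,
\[
(\mathbf{I}+\mathbf{M}^{+})\widehat{\mathbf{u}}_{\alpha}=q\mathbf{K}^{\alpha}\mathbf{r}+\mathbf{K}^{\alpha+1}\mathbf{r}-\mathbf{r}\,\widehat{U}_{\alpha,0}-\big[C(m+1)-C(m)\big]\widehat{\mathbf{u}}_{\alpha}.
\]

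Finally I would convert these vector identities into the scalar relations \eqref{2dpKP-U+} by left-multiplying each by $\mathbf{w}_{\beta}=\mathbf{s}^{T}\mathbf{L}^{\beta}(\mathbf{I}+\mathbf{M}^{+})^{-1}$: this replaces $(\mathbf{I}+\mathbf{M}^{+})(\cdot)$ by $\mathbf{s}^{T}\mathbf{L}^{\beta}(\cdot)$ and each $\mathbf{K}^{j}\mathbf{r}$ by $U_{j,\beta}$, while $\mathbf{w}_{\beta}\big[C(m+1)-C(m)\big]\widehat{\mathbf{u}}_{\alpha}$ is carried along verbatim. It then remains to re-express $\mathbf{s}^{T}\mathbf{L}^{\beta}\widehat{\mathbf{u}}_{\alpha}$ through $\widehat{\ }$-shifted scalars: using the $\mathbf{s}$-dispersion $\mathbf{s}=(q\mathbf{I}-\mathbf{L})\widehat{\mathbf{s}}$ in the form $\mathbf{s}^{T}\mathbf{L}^{\beta}=q\,\widehat{\mathbf{s}}^{T}\mathbf{L}^{\beta}-\widehat{\mathbf{s}}^{T}\mathbf{L}^{\beta+1}$ gives $\mathbf{s}^{T}\mathbf{L}^{\beta}\widehat{\mathbf{u}}_{\alpha}=q\widehat{U}_{\alpha,\beta}-\widehat{U}_{\alpha,\beta+1}$, and collecting terms yields \eqref{2dpKP-U-3}. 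The identical computation with $(q,\widehat{\ })$ replaced by $(p,\widetilde{\ })$, and then by $(0,\partial_{x})$, and with no source term, produces \eqref{2dpKP-U-2} and \eqref{2dpKP-U-1}. The work here is bookkeeping rather than conceptual difficulty: the point to be careful about is tracing the $\big[C(m+1)-C(m)\big]\widehat{\mathbf{u}}_{\alpha}$ contribution through the resolvent $(\mathbf{I}+\mathbf{M}^{+})^{-1}$ and confirming it emerges precisely as $-\mathbf{w}_{\beta}\big[C(m+1)-C(m)\big]\widehat{\mathbf{u}}_{\alpha}$ in \eqref{2dpKP-U-3} and never appears in \eqref{2dpKP-U-1} or \eqref{2dpKP-U-2}, together with keeping the transpose conventions for the action of $\mathbf{L}$ on $\mathbf{s}^{T}$ consistent with those already used in Proposition~\ref{2pKP-cm-1}, so that the passages between $\widehat{\mathbf{s}}$, $\widetilde{\mathbf{s}}$ and $\mathbf{s}$ are legitimate.
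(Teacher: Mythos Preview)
Your proposal is correct and follows essentially the same route as the paper's proof: derive the deformation relations of $\mathbf{M}^{+}$ in $x,n,m$ (the only new term being $C(m+1)-C(m)$ in the $m$-shift), pass to the auxiliary vectors $\mathbf{u}_{\alpha}$ and $\mathbf{w}_{\beta}$, and then contract with the $\mathbf{s}$-dispersion to obtain \eqref{2dpKP-U+}. Your write-up is more detailed than the paper's terse ``similarly \ldots it can be calculated,'' but the structure and intermediate identities coincide with those listed in \eqref{2pKP-M+}.
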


\begin{proof}
Similarly, for the new dressed Cauchy kernel $\mathbf{M}^+, \mathbf{u}_\alpha, \mathbf{w}_\beta$, we have the relations
 \begin{subequations}\label{2pKP-M+}
\begin{align}
&\widetilde{\mathbf{M}}^+ =\mathbf{M}^+ +\mathbf{r}\widetilde{\mathbf{s}}^T, ~ \mathbf{M}^+_x =\mathbf{rs}^T,~
 \widehat{\mathbf{M}}^+ =\mathbf{M}^+ +\mathbf{r}\widehat{\mathbf{s}}^T+C(m+1)-C(m),\\
 &\partial_x \mathbf{u}_\alpha = \mathbf{u}_{\alpha+1}-\mathbf{u}_0U_{\alpha,0},~
 \partial_x \mathbf{w}_\beta= \mathbf{w}_{\beta+1}-\mathbf{w}_0 U_{0,\beta},\\
&\widetilde{ \mathbf{u}}_\alpha =p \mathbf{u}_\alpha + \mathbf{u}_{\alpha+1}- \widetilde{U}_{\alpha,0}\mathbf{u}_0,~
\mathbf{w}_\beta = p\widetilde{ \mathbf{w}}_\beta- \widetilde{\mathbf{w}}_{\beta+1}+ U_{0,\beta}\widetilde{\mathbf{w}}_0
\\
&\widehat{ \mathbf{u}}_\alpha =q \mathbf{u}_\alpha + \mathbf{u}_{\alpha+1}- \widehat{U}_{\alpha,0}\mathbf{u}_0- (\mathbf{I}+\mathbf{M}^+)^{-1}[C(m+1)-C(m)]\widehat{ \mathbf{u}}_\alpha.
 \end{align}
\end{subequations}
Using the relations  \eqref{2pKP-M+} and definitions $U_{\alpha,\beta}$, it can be calculated as \eqref{2dpKP-U+}.
\end{proof}

\begin{itemize}
	\item Second-type extended D$\Delta^2$pKP system
\end{itemize}

 Taking $u=U_{0,0}$ and $\Phi=(\phi_1,\phi_2,\cdots,\phi_N), ~\Psi=(\psi_1,\psi_2,\cdots,\psi_N)^T$ with $\psi_j=[\mathbf{u}_0]_j,$
$\phi_j=[c_j(m+1)-c_j(m)][\mathbf{w}_0]_j$ , from \eqref{2dpKP-U+}, we can get
\begin{subequations}\label{2pKP+SCS2}
  \begin{align}
      &\partial_x(\widehat{u}-\widetilde{u})= (p-q+\widehat{u}-\widetilde{u})(u+\widehat{\widetilde{u}}-\widehat{u}-\widetilde{u})  +(1-E_n)\Phi\widehat{\Psi},\\
      &\widetilde{\Psi}-\Psi_x = (p+u-\widetilde{u})\Psi,~
      \Phi+\widetilde{\Phi}_x = (p+u-\widetilde{u})\widetilde{\Phi}.
   \end{align}
\end{subequations}
Under the transformation  $\mathrm{u}=u-pn-mq$,  the equtaions \eqref{2pKP+SCS2}
can be rewrite to obtain second-type extended  D$\Delta^2$pKP system (ED$\Delta^2$pKP-2),
\begin{subequations}\label{2pKP=2}
  \begin{align}
&\partial_x(\widehat{\mathrm{u}}-\widetilde{\mathrm{u}})= (\widehat{\mathrm{u}}-\widetilde{\mathrm{u}})(\mathrm{u}+\widehat{\widetilde{\mathrm{u}}}-\widehat{\mathrm{u}}-\widetilde{\mathrm{u}})  +(1-E_n)\Phi\widehat{\Psi},\\
   & \widetilde{\Psi}-\Psi_x = (\mathrm{u}-\widetilde{\mathrm{u}})\Psi,~
     \Phi+\widetilde{\Phi}_x = (\mathrm{u}-\widetilde{\mathrm{u}})\widetilde{\Phi}.
   \end{align}
\end{subequations}

\begin{itemize}
	\item Second-type extended D$\Delta^2$pmKP system
\end{itemize}

From  \eqref{2dpKP-U+}, taking $v=1-U_{-1,0}, w =1-U_{0,-1} $ and  $\psi_j=[\mathbf{u}_{-1}]_j,~ \phi_j=[c_j(m+1)-c_j(m)][\mathbf{w}_0]_j$, we have
\begin{subequations}\label{scs-2mkp-2}
  \begin{align}
&p\Bigl(\frac{\widehat{v}}{\widetilde{\widehat{v}}}-\frac{v}{\widetilde{v}}\Bigr)+q\Bigl(\frac{v}{\widehat{v}}-\frac{\widetilde{v}}{\widehat{\widetilde{v}}}\Bigr)+\Bigl(\frac{\widehat{v}_x}{\widehat{v}}-\frac{\widetilde{v}_x}{\widetilde{v}}\Bigr)
  =\bigl(\frac{1}{\widehat{\widetilde{v}}}-\frac{1}{\widehat{v}}\bigr)\Phi\widehat{\Psi},\\
&(\Phi-\widetilde{\Phi}_x)\widetilde{v}=\widetilde{\Phi}(\widetilde{v}_x +pv),~
\partial_x \Psi\widetilde{v}=(\widetilde{\Psi}-p\Psi)v.
   \end{align}
\end{subequations}
Equations  \eqref{scs-2mkp-2} can be named as second-type extended  D$\Delta^2$pmKP system (ED$\Delta^2$pmKP-2)

\begin{itemize}
	\item Second-type extended D$\Delta^2$SKP system
\end{itemize}

Taking $z=U_{(-1,-1)}-\frac{n}{p}-\frac{m}{q}-\frac{h}{r}$ and $\psi_j=[\mathbf{u}_{-1}]_j,~ \phi_j=[c_j(m+1)-c_j(m)][\mathbf{w}_{-1}]_j$,  we have
\begin{subequations}\label{2SKP-2+scs=1}
  \begin{align}
    &\frac{1-\widehat{z}_x}{1-\widetilde{z}_x} =\frac{(\widehat{z}-\widehat{\widetilde{z}})\{q(z-\widehat{z})+\Phi\widehat{\Psi}\}}
  {(z-\widetilde{z})\{q(\widetilde{z}-\widetilde{\widehat{z}})+\widetilde{\Phi}\widetilde{\widehat{\Psi}}\}},\\
    &\partial_x \Phi =(p\widetilde{\Phi}-\Phi)\frac{1-\widetilde{z}_\xi}{p(z-\widetilde{z})},
     \partial_x \Psi=(\widetilde{\Psi}-p\Psi)\frac{1-z_\xi}{p(z-\widetilde{z})}.
   \end{align}
\end{subequations}
Equations \eqref{2SKP-2+scs=1} are referred to as second-type extended  D$\Delta^2$SKP system (ED$\Delta^2$SKP-2).
%
%
\section{Solutions}\label{sec-3}
To get the solution of semi-discrete KP-type equations and second-type extended semi-discrete KP-type systems, for the given matrices $(\mathbf{L},~\mathbf{K})$, we need to  define the wave factors for semi-discrete cases
\begin{subequations}\label{PWF-3.1}
\begin{align}
   &  \rho_i=(p+k_i)^n (q+k_i)^m \mathrm{e}^{\eta_i},~\eta_i=k_ix+\eta^{(0)}_i,k_i,~\eta^{(0)}_i\in \mathbb{C},\\
   &  \sigma_i=(p-l_i)^{-n}(q-l_i)^{-m}\mathrm{e}^{\zeta_i},~\zeta_i=l_ix+\zeta^{(0)}_i,l_i,~\zeta^{(0)}_i\in \mathbb{C}.
\end{align}
\end{subequations}
To get the solution of first-type extended semi-discrete KP-type systems, for the given matrices $(\mathbf{L},~\mathbf{K},~C(x))$, we need to  redefine the wave factors for semi-discrete cases
\begin{subequations}\label{PWF-3.2}
\begin{align}
   &  \rho_i=(p+k_i)^n (q+k_i)^m \mathrm{e}^{\eta_i},~\eta_i=k_ix+\int_0^{x}C(z)dz+\eta^{(0)}_i,k_i,~\eta^{(0)}_i\in \mathbb{C},\\
   &  \sigma_i=(p-l_i)^{-n}(q-l_i)^{-m}\mathrm{e}^{\zeta_i},~\zeta_i=l_ix+\int_0^{x}C(z)dz+\zeta^{(0)}_i,l_i,~\zeta^{(0)}_i\in \mathbb{C}.
\end{align}
\end{subequations}

Next, the solutions of Sylvester equation and dispersion relation can be, similarly, listed as in Ref.\cite{AML-YO-2020}.

\noindent
\textbf{Case 1.}~When $\mathbf{K}=\mathbf{D}^{[N]}(\{k_i\}^{N}_{1})$,~$\mathbf{L}=\mathbf{D}^{[N]}(\{l_i\}^{N}_{1})$,
 $C(x)=\mathbf{D}^{[N]}(\{c_i(x)\}^{N}_{1}),$ or $C(m)=\mathbf{D}^{[N]}(\{c_i(m)\}^{N}_{1}),$
the matrices
$\mathbf{r}=\mathbf{r}_{D}^{ [N]}(\{k_i\}_{1}^{N})$ and $\mathbf{s}=\mathbf{s}_{D}^{ [N]}(\{l_i\}_{1}^{N})$ are composed of
$r_i=\rho_i, s_j=\sigma_j$ respectively, where
\begin{equation}\label{M-1}
\mathbf{M}  = \Bigl(\frac{r_i s_j}{k_i+l_j}\Bigr)_{N\times N}\, .
\end{equation}

\noindent
\textbf{Case 2.}~When $ \mathbf{K}=\mathbf{J}^{[N]}(k_1), ~\mathbf{L}=\mathbf{J}^{[N]}(l_1),$ ~$C(x)=\mathbf{J}^{[N]}(c_1(x))$ or $C(m)=\mathbf{J}^{[N]}(c_1(m))$,
we have $\mathbf{r}=\mathbf{r}_{J}^{ [N]}(k_1)$ and $\mathbf{s}=\mathbf{s}_{J}^{[N]}(l_1)$ composed of
$r_i=\frac{\partial^{i-1}_{k_1}\rho_1}{(i-1)!},\  s_j=\frac{\partial^{N-j}_{l_1}\sigma_1}{(N-j)!}$ respectively,
and $\mathbf{M}=\mathbf{FGH}$
where
\begin{equation}\label{FGH-2}
\mathbf{F}=\mathbf{T}^{[N]}(\{r_i\}^{N}_{1}),~~\mathbf{G}=\mathbf{G}^{[N;N]}_{JJ}(k_1;l_1),~~ \mathbf{H}=\mathbf{H}^{[N]}(\{s_j\}^{N}_{1}).
\end{equation}

\noindent
\textbf{Case 3.}~When
\begin{subequations}
\label{Ga,Lb-gen}
\begin{align}
&\mathbf{K}=\mathrm{Diag} \{\mathbf{D}^{[N_1]}(\{k_i\}^{N_1}_{1}),
\mathbf{J}^{[N_2]}(k_{N_1+1}),\mathbf{J}^{[N_3]}(k_{N_1+2}),\cdots,
\mathbf{J}^{[N_s]}(k_{N_1+(s-1)})\},\\
&\mathbf{L}=\mathrm{Diag} \{\mathbf{D}^{[N_1]}(\{l_i\}^{N_1}_{1}),
\mathbf{J}^{[N_2]}(l_{N_1+1}), \mathbf{J}^{[N_3]}(l_{N_1+2}),\cdots,
\mathbf{J}^{[N_s]}(l_{N_1+(s-1)})\},
\end{align}
\end{subequations}
with $\sum_{i=1}^sN_i=N$, $C(x)=\mathrm{Diag}\{\mathbf{D}^{[N_1]}(\{c_i(x)\}^{N_1}_{1}), \mathbf{J}^{[N_2]}(c_{N_1+1}(x)),
\cdots, \mathbf{J}^{[N_2]}(c_{N_1+(s-1)}(x))\}$ or $C(m)=\mathrm{Diag}\{\mathbf{D}^{[N_1]}(\{c_i(m)\}^{N_1}_{1}), \mathbf{J}^{[N_2]}(c_{N_1+1}(m)),
\cdots, \mathbf{J}^{[N_2]}(c_{N_1+(s-1)}(m))\}$, we have
\begin{equation}
\mathbf{r}=\left(
\begin{array}{l}
\mathbf{r}_{D}^{[N_1]}(\{k_i\}_{1}^{N_1})\\
\mathbf{r}_{J}^{[N_2]}(k_{N_1+1})\\
~~\vdots\\
\mathbf{r}_{J}^{[N_s]}(k_{N_1+(s-1)})
\end{array}
\right),~~~\mathbf{s}=\left(
\begin{array}{l}
\mathbf{s}_{D}^{[N_1]}(\{l_j\}_{1}^{N_1})\\
\mathbf{s}_{J}^{N_2}(l_{N_1+1})\\
~~\vdots\\
\mathbf{s}_{J}^{[N_s]}(l_{N_1+(s-1)})
\end{array}
\right),
\end{equation}
where
\begin{align}
&\mathbf{F}=\mathrm{Diag}\{
\mathbf{D}^{[N_1]}(\{r_i\}^{N_1}_{1}),
\mathbf{T}^{[N_2]}(k_{N_1+1}),\mathbf{T}^{[N_3]}(k_{N_1+2}),\cdots,
\mathbf{T}^{[N_s]}(k_{N_1+(s-1)}) \},\label{KP-r-M-g-F}\\
&\mathbf{H}=\mathrm{Diag}\{
\mathbf{D}^{[N_1]}_{D}(\{s_j\}^{N_1}_{1}),
\mathbf{H}^{[N_2]}(l_{N_1+1}),
\mathbf{H}^{[N_3]}(l_{N_1+2}),
\cdots,
\mathbf{H}^{[N_s]}(l_{N_1+(s-1)})\},\label{KP-r-M-g-H}
\end{align}
with $\mathbf{M}=\mathbf{FGH}$, and $\mathbf{G}$ is a block matrix
$\mathbf{G}=(\mathbf{G}_{i,j})_{s\times s}$,
where
\[
\begin{array}{l}
\mathbf{G}_{1,1}=
    \Bigl(\frac{\rho_i \sigma_j}{k_i+l_j}\Bigr)_{N_1\times N_1};~~
 \mathbf{G}_{1,j}=\mathbf{G}^{[N_1;N_j]}_{DJ}(\{k_i\}^{N_1}_{1};l_{N_1+j-1}),~(1<j\leq s); \\
\mathbf{G}_{i,1}=\mathbf{G}^{[N_i;N_1]}_{JD}(k_{N_1+i-1};\{l_j\}^{N_1}_{1}),~(1<i\leq s);
~~ \mathbf{G}_{i,j}=\mathbf{G}^{[N_i;N_j]}_{JJ}(k_{N_1+i-1};l_{N_1+j-1}),~(1<i,j\leq s).
\end{array}
\]

Note that the commutative conditions are satisfied for the above three cases.

\section{Conclusion}\label{sec-4}
Semi-discrete versions of soliton systems, i.e. systems given by integrable partial  differential-difference equations, has significant roles in mathematics and physics. Based on KP-type equations involving one continuous variable and two discrete variables, we derive the D$\Delta^2$pKP, D$\Delta^2$pmKP and D$\Delta^2$SKP equations, as well as two kinds of corresponding extended systems. The two types of extensions are given in terms of  squared eigenfunctions. This also demonstrates, to some extent, the capability of the Cauchy matrix method in constructing new integrable systems.

For KP-type equation with one continuous variable and two discrete variables, we can continue to study its conservation law and related properties. At the same time, the extended Volterra chain system can be further investigated in the future. Furthermore, for the extended D$^2\Delta$KP equation, the squared eigenfunction  was given as a symmetry of the D$^2\Delta$KP equation in \cite{CK-JNMP-2017}. Whether the two types of squared eigenfunction extensions  can be utilized as symmetries is worth exploring.
The research on the relationships between integrable systems and symmetries \cite{MFM-2015-DKP-Ghost-sym,YYQ-JPA-2009}  not only sheds new light on the current theory of integrable systems, but also enhances many areas of pure mathematics.

\vskip 15pt
\subsection*{Acknowledgments}
This project is supported by Natural Science Foundation of Henan (No.242300421687) and Henan Provincial Science and Technology Research Project (No.222102210258).
\vskip15pt
\subsection*{Conflict of Interest} The authors of this work declare that they have no conflicts of interest.


\begin{thebibliography}{99}
{\small
\bibitem{Date-JPSJ-1982}
             E. Date, M. Jimbo, T. Miwa,
            Method for generating discrete soliton equations. II,
           J. Phys. Soc. Japan, 51 (1982) 4125–4131.
\bibitem{Miwa-1981}
           T. Miwa,
           On Hirota's difference equations.
           Proc. Japan Acad., 58 (1981) 9-12.


\bibitem{PLA-1984}
     F. W. Nijhoff, H. W. Capel, G. L. Wiersma, G. R. W. Quispel,
      Bäcklund transformations and three-dimensional lattice equations.
       Phys. Lett. A, 105 (1984) 267–72
\bibitem{DS-book-2003}
H. Eisenberg, Y. Silberberg, Discrete Solitons, Springer Berlin Heidelberg, 2003.
\bibitem{DIS-2016-book}
       J. Hietarinta, N. Joshi, F. W. Nijhoff,
        Discrete Systems and Integrability,
        Cambridge University Press, 2016.

\bibitem{OL-1988-Waveguide}
            D. N. Christodoulides, R. I. Joseph,
            Discrete self-focusing in nonlinear arrays of coupled wave-guides, Opt. Lett., 13 (1988) 794-796.

\bibitem{PRL-1998-Waveguide}
            H. S. Eisenberg, Y. Silberberg , R. Morandotti, A. R. Boyd,  J. S. Aitchison, Discrete Spatial Optical Solitons in Waveguide Arrays, Phys. Rev. Lett., 81 (1998) 3383-3386.
\bibitem{CK-JNMP-2017}
        K. Chen, X. Deng, D.J. Zhang,
            Symmetry constraint of the differential-difference KP hierarchy and a second discretization of the ZS-AKNS system.
        J. Nonlinear Math. Phys., 24 (2017) 18-35.




\bibitem{FW-Nonl-2013}
          W. Fu, L. Huang, K. M. Tamizhmani, D. J. Zhang,
          Integrability properties of the differential-difference Kadomtsev-Petviashvili hierarchy and continuum limits,
          Nonlinearity, 26 (2013) 3197-3229.
\bibitem{Tamizhmani-CSF-1997}
        S. K. Vel, K. M. Tamizhmani,
        Lax Pairs, symmetries and conservation Laws of a differential-difference equation-Sato's approach,
        Chaos Solit. Fractals, 8 (1997) 917--931.
\bibitem{GLW-PA-1988-Toda}
        G. L. Wiersma, H. W. Capel,
        Lattice equations, hierarchies and Hamiltonian structures: III. The 2D Toda and KP hierarchies,
        Physica A, 149 (1988) 75-106.

 \bibitem{IMRN-2015-Novikov}
        E. V.Ferapontov, V. S. Novikov, I. Roustemoglou,
        On the classification of discrete Hirota-type equations in 3D,
        Int. Math. Res. Notices, 13(2015), 4933-4974.

\bibitem{TMP-2023-classification}
       I. T. Habibullin, A. R. Khakimova,
       On the classification of nonlinear integrable three-dimensional chains via characteristic Lie algebras,
       Theor. Math. Phys., 217(2023) 1541-1573.


 \bibitem{PA-1988}
         G. L. Wiersma, H. W. Capel, Lattice equations, hierarchies and Hamiltonian structures: II. KP-type of hierarchies on 2D lattices,
          Physica A, 149 (1988), 49–74.

\bibitem{TMP-1999-Adler}
       V. E. Adler, S. Ya. Startsev,
       Discrete analogues of the Liouville equation,
       Theor. Math. Phys., 121(1999) 1484-1495.

\bibitem{FW-1C2DKP-2021}
        Y. Yin, W. Fu,
        Integrable semi-discretisation of the Drinfel'd–Sokolov hierarchies,
        Nonlinearity, 35 (2021) 3324-3357
\bibitem{Nij-JPA-2009}  F. W. Nijhoff, J. Atkinson, J. Hietarinta,
        Soliton solutions for ABS lattice equations: I. Cauchy matrix approach,
        J. Phys. A: Math. Theor., 42 (2009) 404005 (34pp).

\bibitem{ZZ-SAMP-2013} D. J. Zhang, S. L. Zhao,
        Solutions to ABS lattice equations via generalized Cauchy matrix approach,
        Stud. Appl. Math., 131 (2013) 72-103.

\bibitem{JNMP-2014} D.D. Xu, D.J. Zhang, S.L. Zhao,
        The Sylvester equation and integrable equations: I. The Korteweg-de Vries system and sine-Gordon equation,
        J. Nonl. Math. Phys., 21 (2014) 382-406.

\bibitem{CNSNS-2013}
          W. Feng, S. L. Zhao,
        Generalized Cauchy matrix approach for lattice KP-type equations,
        Commun. Nonlinear Sci. Numer. Simulat, 18 (2013) 1652–1664.
\bibitem{Nij-1985}
        F. W. Nijhoff, H. W. Capel, G. L. Wiersma,
         Integrable lattice systems in two and three dimensions,
          Lecture Notes in Physics
          239 (1985) 263–302.
\bibitem{Nij-PLA-1994}
          F. W. Nijhoff, G. D. Pang,
           A time-discretized version of the Calogero–Moser model
            Phys. Lett. A, 191 (1994) 101–107.

\bibitem{CTP-KP-2020}H. J. Tian, D. J. Zhang,
        Cauchy matrix structure of the Mel'nikov
        model of long-short wave interaction,
        Commun. Theor. Phys., 72 (2020) 125006 (11pp).
\bibitem{AML-YO-2020}H. J. Tian, D. J. Zhang,
        Cauchy matrix approach to integrable equations with
        self-consistent sources and the Yajima–Oikawa system.
        Appl. Math. Lett., 103 (2020) 106165 (7pp).
\bibitem{AML-Tian-2022}
        H. J. Tian, X. J. Feng, W. M. Liu, Extended KdV equations generated by squared eigenfunction symmetry. Appl. Math. Lett., 128(2022) 107852 (7pp).
\bibitem{MFM-2015-DKP-Ghost-sym}
        C. Z. Li, J.P. Cheng, K. L. Tian, M. H. Li, J. S. He,
        Ghost symmetry of the discrete KP hierarchy,
        Monat. f\"ur. Math., 180(2015) 815-832.

\bibitem{YYQ-JPA-2009}
        Y.Q. Yao , X.J. Liu, Y.B. Zeng,
        A new extended discrete KP hierarchy and a generalized dressing method,
        J. Phys. A: Math. Theor., 42(2009) 454026 (10pp).



}



\end{thebibliography}
\end{document}